\def\ps@headings{%
\def\@oddhead{\mbox{}\scriptsize\rightmark \hfil \thepage}%
\def\@evenhead{\scriptsize\thepage \hfil \leftmark\mbox{}}%
\def\@oddfoot{}%
\def\@evenfoot{}}
\newcommand{\bydef}{\triangleq}
\def\bydef{:=}
\def\bb0{{\mathbb{0}}}
\def\bydef{:=}
\def\bb{{\mathbf{b}}}
\def\bh{{\mathbf{h}}}
\def\b0{{\mathbf{0}}}
\def\bydef{:=}
\def\sf0{{\mathsf{0}}}
\def\nn{\nonumber}
\begin{document}

\newtheorem{thm}{Theorem}
\newtheorem{lemma}{Lemma}
\newtheorem{rem}{Remark}
\newtheorem{exm}{Example}
\newtheorem{prop}{Proposition}
\newtheorem{defn}{Definition}
\newtheorem{cor}{Corollary}
\def\proof{\noindent\hspace{0em}{\itshape Proof: }}
\def\endproof{\hspace*{\fill}~\QED\par\endtrivlist\unskip}
\def\bh{{\mathbf{h}}}
\def\SIR{{\mathsf{SIR}}}
\def\SINR{{\mathsf{SINR}}}

\title{Competitive Ratio Analysis of Online Algorithms to Minimize Packet Transmission Time in Energy
Harvesting Communication System}
\author{
Rahul~Vaze\\ School of Technology and Computer Science,\\ Tata Institute of Fundamental Research, \\ Homi Bhabha Road, Mumbai 400005, \\vaze@tcs.tifr.res.in. }
\maketitle
\pagenumbering{arabic}

\begin{abstract}
The design of online algorithms for minimizing packet transmission  time is considered for single-user Gaussian channel and two-user Gaussian multiple access channel (GMAC) powered by natural renewable sources. The most general case of arbitrary energy arrivals is considered where neither the future energy arrival instants or amount, nor their distribution is known. The online algorithm adaptively changes the transmission rate according to the causal energy arrival information, so as to minimize the packet transmission time.  For a minimization problem, the utility of an online algorithm is tested by finding its competitive ratio or competitiveness that is  the maximum of the ratio of the gain of the online algorithm and the optimal offline algorithm over all input sequences.
We derive a lower bound that shows that competitive ratio of any online algorithm is at least $1.38$ for single-user Gaussian channel and $1.356$ for GMAC.
A `lazy' transmission policy that chooses its transmission power to minimize the transmission time assuming that no further energy arrivals are going to occur in future is shown to be strictly two-competitive for both the single-user Gaussian channel and the GMAC. 
\end{abstract}
\section{Introduction}
Consider a wireless communication system where the source harvests energy from natural renewable sources, such as solar cells, windmills, etc. for transmitting its data to the destination. 
Using renewable sources of energy for powering wireless communication systems provides increased lifetime of transmitters, improved energy efficiency of low power devices, and also a 
means for {\it green} communication. Recent hardware progress has contributed towards realizing efficient practical design of small sized energy harvesting devices with sufficient power yield required for communication purposes.

Harvesting energy from natural sources, however, makes the future available energy levels at the source unpredictable and the source has to adaptively choose the transmission power for maximizing its utility function without knowing the future energy arrivals. 
Another important constraint dictated by energy harvesting from nature is the energy neutrality constraint, i.e. energy spent by any time instant cannot be more than the energy harvested until that time. Designing communication systems satisfying hard energy constraints is typically a very a challenging problem. A prime example of this is the capacity of the additive white Gaussian noise (AWGN) channel that is only known under an average power constraint, and remains open under a peak power constraint.

In this paper, we consider two channel models: single source single destination Gaussian channel (SISO), and  two-source single destination Gaussian channel, also called the two-user Gaussian multiple access channel (GMAC), where each source is harvesting energy from the nature/renewable sources. For both these channel models, the goal of the sources is to minimize the total transmission time of the bits/packets it seeks to send to the destination.
We assume that all the bits/packets that the sources wish to send are available at the beginning of the transmission. The sources are assumed to have only causal information about energy arrivals. 
To model a  general energy harvesting system, the sources are not assumed to have any information about the distribution of future energy arrivals. This assumption is motivated by the fact that energy is envisaged to be harvested from various natural/renewable sources such as solar, wind, vibrational, wrist strapped /shoe embedded devices for which there may not be any distribution on harvested energy or it may be hard to compute. 

In prior work, to minimize the total transmission time in an energy harvesting system, optimal offline algorithms (that have access to all future energy arrivals instants and amounts) have been derived in 
\cite{UlukusEH2011b, YenerEH2011, ZhangEH2011, UlukusEH2011c}. The scope of these algorithms, however, is limited because of unrealistic assumption of non-causal information. 
Some properties of online algorithms (that use only causal energy arrival information) where the source has the knowledge of the distribution of 
energy harvest instants and amounts, have been derived in \cite{UlukusEH2011a, ChaporkarEH2011} using results from stochastic control theory. Similar results are available for many other communication channels, e.g. interference channel \cite{YenerIntChan2011}, broadcast channel \cite{Uysal2011}, relay channel \cite{MehtaEH2010}, ad hoc networks \cite{HuangEH2011}, however, to the best of our knowledge no analysis is known for online algorithms with unknown energy harvest distribution for minimizing transmission time.

With only causal energy harvest information and unknown energy arrival distribution, we turn to competitive ratio analysis of online algorithms that is popular in computer science community \cite{BorodinOnlineBook} to derive "good" online algorithms for minimizing the total transmission time in an energy harvesting system. With online algorithms, no knowledge of future inputs (energy arrivals in our case) is assumed and the input can even be generated by an adversary that creates new input portions based on the systemÕs reactions to previous ones. The goal is to derive algorithms that have a provably good performance even against adversarial inputs. 
The performance of online algorithms is usually evaluated using competitive analysis \cite{BorodinOnlineBook}, where an online algorithm $A$ is compared with an optimal offline algorithm $O$ that knows the entire request sequence $\boldsymbol\sigma$  in advance and can serve it with minimum cost.  

In prior work, competitive analysis has been used to design online algorithms for several communication systems, e.g. \cite{ElGamalOnline2002, ChangOnline2008, BuchbinderOnline2009,BuchbinderOnline2010,KannanOnline2010}. In particular, \cite{ElGamalOnline2002}, considered the design of online algorithms for minimizing transmission energy with given deadlines in a broadcast channel, \cite{ChangOnline2008} analyzed the problem of finding online algorithms for maximizing the throughput of opportunistic spectrum access techniques, \cite{KannanOnline2010} studied a variation of this problem with primary transmission sensing uncertainty, and \cite{BuchbinderOnline2009,BuchbinderOnline2010} considered the problem of online waterfilling when future channel realizations and their distribution are unknown under a sum power constraint. The two basic differences between the problem studied in this paper and prior work are : i) future energy availability is unknown, and ii) energy neutrality constraint, and to the best of our knowledge these issues have not been addressed in the literature.

To state our results formally, we define an online algorithm and its competitiveness as follows. 
\begin{defn} Let ${\cal P}$ be an optimization problem that depends on request sequence 
$\boldsymbol\sigma = (\sigma_i), \ i=1,2,\dots,$. An online algorithm $A$ for solving ${\cal P}$ is presented with requests $\boldsymbol\sigma= (\sigma_i), \ i=1,2,\dots,$ and it has to serve each request without knowing the future requests. In our case $\boldsymbol\sigma$ is the sequence of energy arrivals. Formally, when processing $\sigma_i$ to solve ${\cal P}$, $A$ does not know any requests $\sigma_t, t > i$. Let the cost of online algorithm $A$ for serving $\boldsymbol\sigma$ be ${\cal P}_A(\boldsymbol\sigma)$. An optimal offline algorithm $O$  knows the entire request sequence $\boldsymbol\sigma$  in advance and serves it with minimum cost ${\cal P}_O(\boldsymbol\sigma)$. 

\end{defn}

\begin{defn}\label{defn:cr}
Let $A$ be any online algorithm for solving a minimization optimization problem ${\cal P}$. Then $A$ is called $r_A$-competitive or has a competitive ratio of $r_A$ if for all sequences of inputs $\boldsymbol\sigma = \sigma_1 \dots \sigma_N \dots$, 
\[\max_{\boldsymbol\sigma}\frac{{\cal P}_A(\boldsymbol\sigma)}{{\cal P}_O(\boldsymbol\sigma)}\le r_A.\] 
\end{defn}

The contributions of this paper are as follows.
\begin{itemize}
\item For SISO channel model, we show that the competitive ratio of any online algorithm to minimize the total transmission time in an energy harvesting system is lower bounded by $1.38$.
\item For the two-user GMAC, we show that for any of the two-users the competitive ratio of any online algorithm to minimize the total transmission time in an energy harvesting system is lower bounded by $1.356$.
\item We propose a `lazy' online algorithm that at any time instant chooses its transmission power to minimize the left over transmission time assuming that no further energy arrivals are going to occur in future. We show that competitive ratio of this lazy online algorithm is strictly less than $2$ to derive an upper bound on the competitive ratio for both SISO and two-user GMAC.
\end{itemize}

We note that even though the derived lower and upper bounds do not match each other, they are 
universal in nature, i.e. they do not depend on the parameters of the system model, in constrast to the prior work on online algorithms for communication related problems  \cite{ElGamalOnline2002, ChangOnline2008, BuchbinderOnline2009,BuchbinderOnline2010,KannanOnline2010}.

\section{SISO System Model}
In this section, we consider the SISO channel model, and formally define the optimization problem for minimizing the total transmission time in an energy harvesting system. We assume a Gaussian channel between the source and the destination, i.e. if  the signal transmitted by source be $x$, then the received signal at the destination is given by $y = x+n$, where $n$ is the additive white Gaussian noise, that is assumed to have zero mean and unit variance. \footnote{The wireless fading channel is considered in a parallel submission to Infocom 2013 \cite{Vaze2012InfocomOnlineFading}.}
Consider an energy harvesting system where a  source receives $E_i>0$ amount of energy at time
instants $t_i,  i = 0,1,\dots$, where $t_0=0$, and $s_{i+1} = t_{i+1} - t_i$ is the time interval between energy arrivals. The actual time is indexed by $t$ without the subscript.
As discussed before, at time $t_i$ no information (not even the distribution) about $t_{i+j}, j>0$ or $E_{i+j}, j>0$ 
is known. The objective of the transmitter
is to send $B$ bits in as minimum time as possible using energies $E_i$'s such that the energy used up by time $t$ is less than or equal to the energy harvested until time $t$ (energy neutrality constraint).

With the Gaussian channel,  the number of bits transmitted using power $P$ in time duration $T$ is given by function
$R(T,P) = T \log_2(1 + P)$, where the energy spent in time $T$ is $E=PT$. The subsequent analysis carries forward for any concave function $R(T,P)$ of $P$. Throughout this paper we assume $\log$ with base $2$ and drop the subscript $2$ from here onwards. Assuming that the source changes transmit power at $N$ instants before completing the data transmission, let $P_1, \dots, P_N$ be the sequence of transmitted power with time spent between the
$i + 1^{th}$ and $i^{th}$ change as $\ell_i, i = 1, \dots,N$. Let ${\bar i}= \max\{i: \sum_{j=1}^{i}\ell_j \le t\}$. Therefore the number of bits transmitted 
 until time $t$ is \[B(t) =\sum_{i=1}^{{\bar i}} \ell_i \log(1+P_i) +  (t-\sum_{i=1}^{{\bar i}} \ell_i)\log(1+P_{{\bar i}+1}),\]
and the energy used up until time $t$ is \[E(t) =\sum_{i=1}^{{\bar i}} \ell_i P_i + P_{{\bar i}+1}(t-\sum_{i=1}^{{\bar i}} \ell_i).\]

Then the optimization problem ${\cal T}$ to find the optimal total transmission time is
\begin{equation}\label{optprob}  {\cal T}_O = \min_{\begin{array}{c}P_i, \ell_i, i=1,2,\dots,N, \\ B(T)=B, E(t)\le \sum_{i, i \le t} E_i\end{array}} T.
\end{equation}
 
 We are interested in finding online algorithms to solve ${\cal T}$ with the best competitive ratio. Towards that end we first describe the optimal offline algorithm for solving ${\cal T}$ \cite{UlukusEH2011b}. 

{\bf Optimal Offline Algorithm for solving ${\cal T}$ \cite{UlukusEH2011b}:} Let $F_n$ be the amount of energy required to finish the entire transmission of $B$ bits before the $n^{th}$ energy harvest time $t_n$, at a constant rate of transmission, i.e. $B \le \max_{t_n^{-}<t_n} t_n^{-}\log\left(1+\frac{F_n}{t_n^{-}}\right), n=0,1,\dots,$. Then compare $F_n$ with the energy available before time $t_n$, $\sum_{i=0}^{n-1} E_i$, and find the smallest $n$ such that energy available $\sum_{i=0}^{n-1} E_i \ge F_n$. Let the minimum $n$ be denoted by $n^{\star}$. A pictorial description of the optimal offline algorithm is presented in Fig. \ref{fig2}.

Now, assume that the algorithm uses  $\sum_{i=0}^{n^{\star}-1} E_i$ to transmit $B$ bits at constant power $P=\frac{\sum_{i=0}^{n^{\star}-1} E_i}{T}$ such that $T \log(1+\frac{\sum_{i=0}^{n^{\star}} E_i}{T}) =B$ for $T < t_{n^{\star}}$. Next, check whether transmitting using power $P$ violates any of the energy arrival constraints, i.e. whether $Pt_n \ge \sum_{i=0}^n E_i$ for any $n < n^{\star}$. If the energy constraint is not violated, then the optimal offline algorithm transmits constant power $P$ to finish transmission in minimum time $T$. Otherwise, let 
$n_1 = \min_{n< n^{\star}} \frac{\sum_{i=0}^n E_i}{t_n}$, which is first instant for which maintaining constant power $P$ is infeasible. Then update $P^1 = \frac{\sum_{i=0}^{n_1} E_i}{t_{n_1}}$, i.e., over the duration $(0, t_{n_1})$, we choose to transmit with power $P^1$ to make sure that the energy
consumption is feasible. Then, at time $t = t_{n_1}$, the total number of bits departed is $t_{n_1} \log\left(1+ \frac{\sum_{i=0}^{n_1} E_i}{t_{n_1}}\right)$,
and the remaining number of bits is $B_{n_1}= B- t_{n_1} \log\left(1+ \frac{\sum_{i=0}^{n_1} E_i}{t_{n_1}}\right)$. Subsequently, with initial number of bits $B_{n_1}$, we start from $t_{n_1}$, and repeat the procedure as above. As before we denote the total transmission time taken by the optimal offline algorithm as ${\cal T}_O$.

\begin{figure} 
\centering
\scalebox{.5}{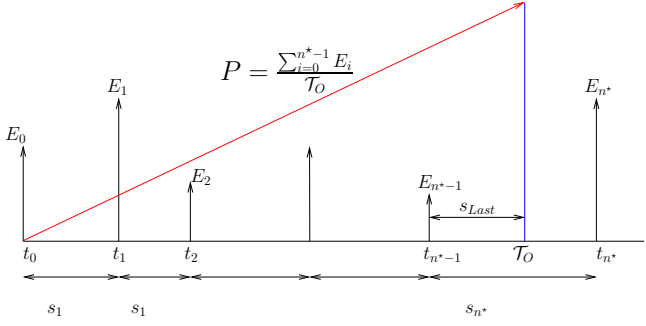}
\caption{Description of the optimal offline algorithm.}
\label{fig2}
\end{figure}


In Section \ref{sec:lb} we derive a lower bound on $r_A$ over all online algorithms $A$, followed by an upper bound in Section \ref{sec:ub} for a specific online algorithm using the properties and structure of the optimal offline algorithm.

\section{Lower Bound on the competitive ratio}\label{sec:lb}

From the description of the optimal offline algorithm \cite{UlukusEH2011b} we can conclude that if the amount of energy arriving in next energy harvest  is  more than or equal to the present energy available, then most or all of the energy should be spent by the next energy arrival. In contrast, if the amount of energy  arriving in next energy harvest  is less than the present energy level, energy should be spent sparingly and some of the energy should be carried forward to the next energy harvest. 
Using this idea we present a lower bound on the competitive ratio of any online algorithm, where we consider two energy input sequences that have very different power transmission profiles with the  optimal offline algorithm. Since an online algorithm does not know the future energy arrivals, it does not know which of the two energy input sequence is going to happen. Thus, to minimize its competitive ratio, it has to find a middle ground between the two very different optimal offline power transmission profiles to minimize its competitive ratio. This way we can find online algorithm independent lower bound on the competitive ratio for solving problem (\ref{optprob}) as follows.

\begin{thm}\label{thm:lb} Let an online algorithm $A$ be $r_A$ competitive for solving ${\cal T}$. Then $r_A \ge 1.38$.
\end{thm}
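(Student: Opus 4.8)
The plan is to construct an explicit adversarial family of two energy-arrival sequences that "diverge" at a single point, and to show that no online algorithm can do well against both simultaneously. The setup is the classical two-phase adversary argument for competitive-ratio lower bounds. Concretely, I would start with an initial energy $E_0 = 1$ available at $t_0 = 0$ and let the online algorithm $A$ transmit at some constant power $p$ on the interval $[0,1]$ (by a convexity/averaging argument one can assume $A$ uses constant power until the next arrival, since spreading a fixed energy budget uniformly maximizes bits sent by Jensen on the concave rate $\log(1+\cdot)$). The adversary then reveals, at time $t_1 = 1$, one of two futures: \emph{(i)} no further energy ever arrives (or a negligible amount), forcing $A$ to limp along on whatever energy it has left; or \emph{(ii)} a large burst $E_1 = M$ (eventually $M \to \infty$) arrives, so that the optimal offline algorithm — knowing this — would have burned almost all of $E_0$ immediately in $[0,1)$ and then finished essentially instantly using the burst.

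The next step is to compute, in each branch, the ratio ${\cal T}_A/{\cal T}_O$ as a function of the online power choice $p$ and the number of bits $B$. In branch (i), the offline optimum (for the sequence with only $E_0$) is to transmit at the lowest sustainable rate, i.e. spread $E_0 = 1$ over a long horizon; in the limit this gives ${\cal T}_O^{(i)} \to B/\log(1 + 1/{\cal T}_O^{(i)})$, and for the hard instance one picks $B$ so that the offline time is some fixed value while $A$, having already spent power $p$ worth of energy in $[0,1]$, is left with $1 - p$ units and must finish the remaining bits — its time ${\cal T}_A^{(i)}$ is strictly larger, increasing in $p$. In branch (ii), conversely, ${\cal T}_O^{(ii)} \to 1$ (finish within the first slot using the impending huge burst at rate $\approx \log(1+1/1)$ over $[0,1)$, then negligible time), whereas $A$ has only produced $\log(1+p)$ bits by time $1$ and still owes $B - \log(1+p)$ bits, so ${\cal T}_A^{(ii)}$ is large when $p$ is small — decreasing in $p$. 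The online algorithm must fix $p$ before learning the branch, so its competitive ratio is at least $\min_p \max\{\,{\cal T}_A^{(i)}(p)/{\cal T}_O^{(i)},\ {\cal T}_A^{(ii)}(p)/{\cal T}_O^{(ii)}\,\}$, and the two terms move oppositely in $p$, so the min-max is attained at the crossover. Optimizing over the free parameter $B$ (and taking $M\to\infty$) should push this crossover value up to $1.38$.

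The main obstacle, and the place requiring real care, is \textbf{making the "without loss of generality, $A$ uses constant power" reduction airtight}, and then correctly solving the resulting transcendental optimization. An online algorithm is allowed to change power arbitrarily within $[0,1]$, so one must argue that any such policy is dominated, for the purposes of this two-instance lower bound, by a constant-power policy characterized by the single number "energy consumed in $[0,1]$" (concavity of $R(T,\cdot)$ handles the bits-per-energy tradeoff, but one also has to check the energy-neutrality constraint is not violated, which here is immediate since all of $E_0$ is available at $t=0$). After that reduction, the ratio in each branch is an explicit function, the equality ${\cal T}_A^{(i)}(p)/{\cal T}_O^{(i)} = {\cal T}_A^{(ii)}(p)/{\cal T}_O^{(ii)}$ together with the offline fixed-point relations $T\log(1+1/T)=$ const becomes a small transcendental system whose solution must be shown to be $\ge 1.38$; I would solve this numerically, reporting the bound to the stated precision, and present a clean closed-form inequality certifying it. A secondary subtlety is choosing $B$ and the "far future" horizon in branch (i) so that both offline times are simultaneously the genuine optima for their respective sequences (one must verify the offline algorithm of \cite{UlukusEH2011b} indeed behaves as claimed on each constructed sequence — in branch (ii) that it front-loads $E_0$, in branch (i) that it stretches $E_0$ out — which follows directly from the $F_n$-versus-$\sum E_i$ comparison in the offline description).
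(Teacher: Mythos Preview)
Your two-sequence adversary structure is exactly the paper's approach, and your reduction to a single parameter (the energy spent in $[0,1]$) via Jensen is the right relaxation. However, the limit $M\to\infty$ destroys the argument. In branch~(ii), once the burst $E_1=M$ arrives at $t_1=1$, the \emph{online} algorithm also receives it: with energy $(1-p)+M$ and $B-\log(1+p)$ bits remaining, its finishing time after $t=1$ is the $T$ solving $T\log\bigl(1+((1-p)+M)/T\bigr)=B-\log(1+p)$, and this $T\to 0$ as $M\to\infty$ just as the offline residual time does. Hence ${\cal T}_A^{(ii)}\to 1$ and ${\cal T}_O^{(ii)}\to 1$, so the branch-(ii) ratio collapses to $1$ for \emph{every} $p$ and provides no constraint at all. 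With branch~(ii) neutralized, branch~(i) alone yields nothing either: the map $p\mapsto {\cal T}_A^{(i)}(p)/{\cal T}_O^{(i)}$ is not monotone increasing as you assert, but is minimized (with value $1$) at $p=E_0/{\cal T}_O^{(i)}$, the offline's own constant power, since the online can then simply replicate the offline schedule. Your $\min_p\max\{\cdot\}$ therefore equals $1$ in the $M\to\infty$ limit.

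The mechanism you want works only when $E_1$ is \emph{finite}. The tension is that the offline power on $[0,1]$ in branch~(i) is tiny (of order $E_0/{\cal T}_O^{(i)}\ll 1$) while in branch~(ii) it equals $E_0$, and the online is squeezed between these two only if the extra bits $\log(1+E_0)-\log(1+p)$ it still owes at $t=1$ in branch~(ii) cost non-negligible time to clear --- which requires a bounded post-$t_1$ energy budget. The paper accordingly keeps $E_1$ finite (concretely $e_0=2$, $e_1=4$, $B=2.8$, with $B$ chosen just below the feasibility ceiling $e_0\log e\approx 2.88$ so that ${\cal T}_O(\boldsymbol\sigma_2)$ is large), obtains ${\cal T}_O(\boldsymbol\sigma_1)=1.32$ and ${\cal T}_O(\boldsymbol\sigma_2)=32.46$, and locates the crossover of the two ratio curves numerically at $\alpha\approx 0.12$, giving $r\ge 1.38$. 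Your plan is salvageable once you drop $M\to\infty$ and instead fix (or optimize over) a finite second arrival.
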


\begin{proof}
From Definition \ref{defn:cr}, we know that the minimum competitive ratio of any online algorithm for solving problem ${\cal T}$ is 

\[r \bydef \min_{A}\max_{\boldsymbol\sigma}\frac{{\cal T}_A(\boldsymbol\sigma)}{{\cal T}_O(\boldsymbol\sigma)}.\] 
Thus, for $m$ specific sequences $\boldsymbol\sigma_1 \dots \boldsymbol\sigma_m$, we have \[r \ge \min_{A}\max_{\boldsymbol\sigma \in \{\boldsymbol\sigma_1 \dots \boldsymbol\sigma_m\}}\frac{{\cal T}_A(\boldsymbol\sigma)}{{\cal T}_O(\boldsymbol\sigma)}.\] Thus, to get a lower bound on $r$, it is sufficient to consider any specific set of input sequences $\boldsymbol\sigma_1 \dots \boldsymbol\sigma_m$. In particular, we will consider $m=2$, i.e. two input sequences $\boldsymbol\sigma_1$ and  $\boldsymbol\sigma_m$. How to choose $\boldsymbol\sigma_1$, $\boldsymbol\sigma_2$ for obtaining the tightest lower bound is detailed in the following.

Consider two different energy arrival sequences, $\boldsymbol\sigma_1$ with $t_0=0,t_1=1,t_2=\infty$ and
$E_0=e_0, E_1=e_1$, while $\boldsymbol\sigma_2$  with $t_0=0,t_1=\infty$ and $E_0=e_0$. Let us represent
$\boldsymbol\sigma_1 = (e_0, e_1, 0, \dots, 0, \dots)$ and $\boldsymbol\sigma_2 = (e_0, 0, \dots, 0, \dots)$. By definition, ${\cal T}_O(\boldsymbol\sigma_i)$ is the time taken by 
the optimal offline algorithm with energy sequence  $\boldsymbol\sigma_i$ for transmitting $B$ bits.
Let $\beta(\boldsymbol\sigma_i)$ be the amount of energy used up by the optimal offline algorithm \cite{UlukusEH2011b} with energy sequence $\boldsymbol\sigma_i$ in between 
time $t=0$ and time $t=1$.
We will choose $e_0, e_1,$ and $B$, such that the following two conditions are satisfied: 
\begin{enumerate}
\item $1<< {\cal T}_O(\boldsymbol\sigma_2) < \infty$, i.e. $e_0$ (with $\boldsymbol\sigma_2$) amount of energy  is sufficient to transmit $B$ bits eventually but requires much more than one time unit. Note that with $\boldsymbol\sigma_2$, $\beta(\boldsymbol\sigma_2) =\frac{1}{{\cal T}_O(\boldsymbol\sigma_2)} <<1$. 
\item $\beta(\boldsymbol\sigma_1) =1$, i.e. with 
$\boldsymbol\sigma_1$ the optimal offline algorithm  spends all its $e_0$ amount of energy  before the next arrival of $e_1$ at $t_1=1$, and ${\cal T}_O(\boldsymbol\sigma_1) < < {\cal T}_O(\boldsymbol\sigma_2)$, i.e. the optimal offline algorithm finishes transmission with energy sequence $\boldsymbol\sigma_1$ much faster compared to with energy sequence
$\boldsymbol\sigma_2$.
\end{enumerate}

Let any online algorithm $A$ spend $\alpha$ fraction of its energy available at time $t_0$ in time $t=0$ to $t=1$.
If $\boldsymbol\sigma_1$ was known to happen, then $A$ would use $\alpha=\beta(\boldsymbol\sigma_1)=1$, while if $\boldsymbol\sigma_2$ 
was expected then $\alpha =\beta(\boldsymbol\sigma_2)$.  In reality, $A$ does not know which of the two sequences $\boldsymbol\sigma_1$ or $\boldsymbol\sigma_2$ is going to happen, and to minimize the penality it has to pay in comparison to the optimal offline algorithm,  $\alpha$ should be 'equidistant' from both $\beta(\boldsymbol\sigma_1)$ and $\beta(\boldsymbol\sigma_2)$ (the fraction of energy used by the optimal offline algorithm in time $t_0$ to $t_1$). Using this idea we will derive a lower bound on $r$. This is why we required $e_0, e_1$, and $B$, to satisfy conditions $1)$ and 2), so that $\beta(\boldsymbol\sigma_2) << \beta(\boldsymbol\sigma_1)$, thereby making it hard for  $A$ to keep $\alpha$ close to both $\beta(\boldsymbol\sigma_1)$ and $\beta(\boldsymbol\sigma_2)$ simultaneously.

Since for both $\boldsymbol\sigma_1$ and $\boldsymbol\sigma_2$ no more energy arrives after time $t=1$, let $A$ distribute all its available energy at time $t=1$ equally over the remaining bits, i.e. it completes the job in minimum time possible starting from time $t=1$. Equally distributing the energy can only relax the lower bound on competitive ratio of $A$. Hence, we can index all possible online algorithms $A$ with $\alpha$, the fraction of energy available at time $t_0$ used in time $t=0$ to $t=1$. 
Thus, \[r \ge \min_{\alpha\in[0,1]}\max \left\{\frac{{\cal T}_A(\boldsymbol\sigma_1)}{{\cal T}_O(\boldsymbol\sigma_1)}, \frac{{\cal T}_A(\boldsymbol\sigma_2)}{{\cal T}_O(\boldsymbol\sigma_2)} \right\}.\]

With this relaxation the only decision for $A$ to make is the choice of  
$\alpha$ it uses in time $t=0$ until $t=1$. 
%
To get the tightest lower bound, we want to find the adversarial values of $e_0, e_1$ and $B$ that maximize \[\min_{\alpha\in[0,1]}\max \left\{\frac{{\cal T}_A(\boldsymbol\sigma_1)}{{\cal T}_O(\boldsymbol\sigma_1)}, \frac{{\cal T}_A(\boldsymbol\sigma_2)}{{\cal T}_O(\boldsymbol\sigma_2)} \right\}.\] Next, we describe a constructive lower bound, which is common in online algorithms literature \cite{BorodinOnlineBook}. 
Let the value of $e_0=2$ and $e_1=4$. Without any loss, the units of all quantities have been suppressed. Then the maximum number of bits that can be transmitted using $e_0=2$ with $\boldsymbol\sigma_2$ is $B_{lim} = \lim_{T\rightarrow \infty} T\log(1+\frac{2}{T}) = 2 \log(e) = 2.88$. We let the number of bits $B$ to be little less than $B_{lim}$ to be $B = 2.8$ bits, where the idea is that  with 
$\boldsymbol\sigma_1$, $\beta(\boldsymbol\sigma_1)=1$ (${\cal T}_O(\boldsymbol\sigma_1)=1.32$), while with 
$\boldsymbol\sigma_2$, $\beta(\boldsymbol\sigma_2)=1/32.46=.0308$ (${\cal T}_O(\boldsymbol\sigma_2)=32.46$),  that is quite small compared to $\beta(\boldsymbol\sigma_1)$, thus satisfying conditions $1)$ and $2)$.
An online algorithm has to choose $\alpha$ to $\min_{\alpha\in[0,1]}\max \left\{\frac{{\cal T}_A(\boldsymbol\sigma_1)}{{\cal T}_O(\boldsymbol\sigma_1)}, \frac{{\cal T}_A(\boldsymbol\sigma_2)}{{\cal T}_O(\boldsymbol\sigma_2)} \right\}$. This is a one dimensional optimization problem that can be solved easily. In Fig. \ref{fig1}, we plot the $\max \left\{\frac{{\cal T}_A(\boldsymbol\sigma_1)}{{\cal T}_O(\boldsymbol\sigma_1)}, \frac{{\cal T}_A(\boldsymbol\sigma_2)}{{\cal T}_O(\boldsymbol\sigma_2)} \right\}$ for $B=2.8$ bits, and $e_1=2$ and $e_2=4$ as a function of $\alpha$, and observe that the optimal $\alpha = .12$, and $r \ge 1.38$.\footnote{One can potentially consider other values (scaling) of $e_0, e_1$ and $B$ with these properties, however, we have observed that it provides only marginal gain, if at all.}

\begin{figure}
\centering
\includegraphics[width=3.4in]{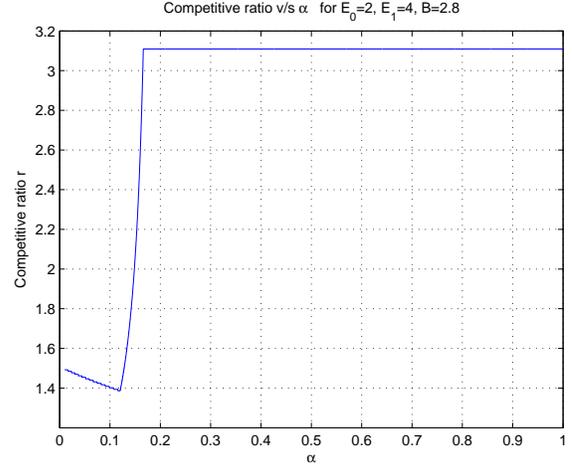}
\caption{Plot of $\max \left\{\frac{{\cal T}_A(\boldsymbol\sigma_1)}{{\cal T}_O(\boldsymbol\sigma_1)}, \frac{{\cal T}_A(\boldsymbol\sigma_2)}{{\cal T}_O(\boldsymbol\sigma_2)} \right\}$ versus $\alpha$ for $e=3$, and  $B=4.2$.}
\label{fig1}
\end{figure}
\end{proof}

\begin{rem} The proof technique introduced in this section for lower bounding the competitive ratio of any online algorithm for minimizing the total transmission time is quite general, however, we have made a very limited use of it for our purposes. Instead of just considering two energy input sequences that have no energy arrival after time $t_1=1$, using the same idea we can consider two (or multiple) input sequences that have the following form $\boldsymbol\sigma_1 = \underbrace{e, e, \dots, e}_{n}, 0, \dots, 0, \dots$, while $\boldsymbol\sigma_2 = \underbrace{e, e, \dots, e}_{m}, 0, \dots, 0, \dots$, where $m < n$, and then let any online algorithm $A$ use $\alpha_i, i=0,1,2,\dots,$ fraction of available energy at each time instant $t_i$. For this general case, however, finding a lower bound on the competitive ratio becomes tedious and a joint optimization over $e$, $B$ and  $\alpha_i, i=0,1,2,\dots,$. We checked the case of $n=3, m=1$, i.e. while considering optimization over $\alpha_0$ and $\phi_1$, however, for this particular problem that yielded only marginal gains in comparison to using $n=2,m=1$ as done in proof of Theorem \ref{thm:lb}.  
\end{rem}

{\it Discussion:} In this section, we presented a lower bound on the competitive ratio of any online algorithm for minimizing the total transmission time in an  energy harvesting system. The key idea behind the lower bound is to evaluate the performance of any online algorithm over two different energy arrival sequences that have very different optimal offline policies. Therefore the online algorithm has to find a middle ground between following the two optimal offline policies to minimize the penalty it has to pay in comparison to the optimal offline policy, thereby resulting in a non-trivial lower bound on its competitive ratio. 

\section{Upper Bound on the Competitive Ratio}\label{sec:ub}
In this section, we propose an online algorithm and derive its competitive ratio. 


Assume for the moment that $B \le \lim_{T\rightarrow \infty} T\log\left(1+\frac{E_0}{T}\right)$, i.e. the energy available at the start $E_0$ is sufficient to transmit all $B$ bits eventually. We will relax this assumption later. We propose the following lazy online algorithm for minimizing the transmission time in an energy harvesting system. 

{\bf Lazy Online Algorithm:} At any energy arrival time instant $t_i,i=0,1,\dots,$, let the available energy be ${\hat E}_i$ (left over plus the new arrival) and the residual bits to be transmitted be ${\hat B}_i$. 
Then find $\min T_i$ such that $T_i\log\left(1+\frac{{\hat E}_i}{T_i}\right) = {\hat B}_i$ and then transmit with power $P_i = \frac{{\hat E}_i}{T_i}$ until the next energy arrival $t_{i+1}$. $T_i$ is the estimated completion time of transmission of $B$ bits at energy arrival instant $t_i$. Clearly, the energy constraint is satisfied at each time instant and the algorithm is online, i.e. it does not depend on future energy arrivals. The lazy online algorithm is a best delivery algorithm (minimizes transmission time) assuming no further energy arrives in future. Next, we show that the lazy online algorithm is $2$-competitive. 

\begin{thm}\label{thm:lazy} The lazy online algorithm is $2$-competitive if $B \le \lim_{T\rightarrow \infty} T\log\left(1+\frac{E_0}{T}\right)$. 
\end{thm}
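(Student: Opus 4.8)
The natural approach is to compare the lazy algorithm's behavior on each inter-arrival interval with the optimal offline algorithm, using the structural description of the offline optimum recalled above. The key observation is that the lazy algorithm's estimated completion time $T_i$ at arrival instant $t_i$ is \emph{monotonically non-increasing} in $i$: whenever new energy $E_{i+1}$ arrives, the available energy strictly increases relative to what the lazy algorithm had assumed (namely no further arrivals), so the residual time to finish can only go down. Hence the final actual completion time ${\cal T}_A$ of the lazy algorithm equals $\lim_i (t_i + T_i)$, and more usefully, ${\cal T}_A \le t_j + T_j$ for the last arrival index $j$ before completion, and also ${\cal T}_A \le T_0$ trivially cannot hold, but rather we get the sequence of bounds ${\cal T}_A = t_k + T_k$ where $k$ is the last arrival before the job finishes.

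**Key steps.** First I would set up notation: let $n^\star$ and the offline completion time ${\cal T}_O$ be as in the offline description, and let $k^\star$ be the last energy arrival index with $t_{k^\star} < {\cal T}_A$ (the lazy algorithm's finish time). Second, I would argue that by time $t_{k^\star}$ the total energy harvested is $\sum_{i=0}^{k^\star} E_i$, and the lazy algorithm, having consumed some amount by $t_{k^\star}$, still has ${\hat E}_{k^\star} \ge \sum_{i=0}^{k^\star} E_i - (\text{energy spent before } t_{k^\star})$; the crucial point is to lower-bound ${\hat E}_{k^\star}$ and ${\hat B}_{k^\star}$ in terms of quantities the offline algorithm must also respect. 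Third — the heart of the argument — I would show $t_{k^\star} \le {\cal T}_O$: intuitively, the offline optimum cannot have finished before $t_{k^\star}$ using only energy $\sum_{i=0}^{k^\star-1} E_i$, because if it could, then the lazy algorithm (which at worst delays consumption) would also have finished using the even larger energy budget available after more arrivals — formally one uses that $F_{k^\star} > \sum_{i=0}^{k^\star-1} E_i$ when $k^\star < n^\star$, i.e. the offline algorithm cannot finish before $t_{k^\star}$ either. Fourth, I would bound the residual time $T_{k^\star}$: after $t_{k^\star}$ no more energy arrives relevant to the comparison, and the lazy algorithm finishes ${\hat B}_{k^\star}$ bits with ${\hat E}_{k^\star}$ energy in time $T_{k^\star}$; since $R(T,P)=T\log(1+P)$ is concave and the offline algorithm also needs at least time to deliver the same residual bits with at most the same total remaining energy, one gets $T_{k^\star} \le {\cal T}_O$ (or $\le {\cal T}_O - t_{k^\star} + (\text{slack})$). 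Combining, ${\cal T}_A = t_{k^\star} + T_{k^\star} \le {\cal T}_O + {\cal T}_O = 2{\cal T}_O$.

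**Main obstacle.** The delicate step is the third one — relating $t_{k^\star}$ to ${\cal T}_O$ and simultaneously controlling the residual bits ${\hat B}_{k^\star}$ that the lazy algorithm still owes at $t_{k^\star}$. The lazy algorithm is ``pessimistic'': it spreads energy assuming nothing more comes, so it transmits at a \emph{lower} rate than it could, meaning ${\hat B}_{k^\star}$ can be close to $B$ even though substantial time has elapsed. One must show that this does not hurt by more than a factor of $2$: the argument should be that either $t_{k^\star}$ itself is already small compared to ${\cal T}_O$ (so the time ``wasted'' being lazy is bounded by ${\cal T}_O$), or the offline algorithm was itself forced by the energy-neutrality constraint to be slow up to $t_{k^\star}$ (so ${\cal T}_O \ge t_{k^\star}$), and in the latter case the remaining transmission — now with enough accumulated energy — takes at most another ${\cal T}_O$. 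Making the concavity comparison for the residual phase fully rigorous, including the edge cases where the lazy algorithm's completion happens exactly at an arrival instant or where $k^\star = 0$, is where the real care is needed; I would handle the single-interval base case ($\boldsymbol\sigma$ with one arrival, where ${\cal T}_A = {\cal T}_O$ exactly) first and then induct on the number of arrivals before completion.
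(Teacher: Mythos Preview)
Your monotonicity observation in step~1 is correct and useful, but the proof plan has two genuine gaps.

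\textbf{Wrong pivot index.} You set $k^\star$ to be the last arrival before the \emph{lazy} algorithm finishes and then try to argue $t_{k^\star}\le {\cal T}_O$. Your own justification only covers the case ``$k^\star<n^\star$''; you give no argument when $k^\star\ge n^\star$, and that case is real. If $t_{n^\star}$ is only slightly larger than ${\cal T}_O$ (recall $t_{n^\star-1}\le {\cal T}_O<t_{n^\star}$), the slower lazy schedule can easily overshoot past $t_{n^\star}$, so that $k^\star\ge n^\star$ and $t_{k^\star}\ge t_{n^\star}>{\cal T}_O$. The circular sentence ``if offline could finish before $t_{k^\star}$ then lazy would also have finished'' is exactly what is at stake and is false in general. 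The correct pivot, which the paper uses, is $m:=n^\star-1$: then $t_m<{\cal T}_O$ is immediate from the definition of $n^\star$, lazy is still running at $t_m$ (since ${\cal T}_A\ge{\cal T}_O>t_m$), and your monotonicity gives ${\cal T}_A\le t_m+T_m$. So the whole problem reduces to showing $T_m\le{\cal T}_O$.

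\textbf{Step 4 does not go through as stated.} You write that offline ``needs at least time to deliver the same residual bits with at most the same total remaining energy.'' But at $t_m$ lazy has \emph{more} residual bits $\hat B_m\ge\tilde B_m$ and \emph{more} residual energy $\hat E_m\ge\tilde E_m$ than offline, so no direct domination argument works. The paper's actual mechanism is a ratio comparison: writing $\phi_m=\hat B_m/\tilde B_m$ and $\theta_m=\hat E_m/\tilde E_m$, one checks from the lazy recursion that $\theta_m/\phi_m\ge 1$ (equivalently $\hat E_m/\hat B_m\ge(\sum_{i\le m}E_i)/B$, since each arrival only improves lazy's energy-per-remaining-bit ratio). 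Combining this with the Jensen bound $B\le {\cal T}_O\log\bigl(1+\sum_{i\le m}E_i/{\cal T}_O\bigr)$, which holds because offline sends $B$ bits in time ${\cal T}_O$ with at most that much energy, and with the concavity inequality $\log(1+\gamma x)\ge\gamma\log(1+x)$ for $\gamma<1$, one obtains ${\cal T}_O\log(1+\hat E_m/{\cal T}_O)\ge \hat B_m$, i.e.\ $T_m\le{\cal T}_O$. This ratio step is the crux you are missing; without it, the residual-phase comparison in your step~4 has no leverage. The paper additionally splits into the constant-power and energy-constrained offline cases and in the former proves the sharper bound $T_m\le\phi_m s_{Last}$, but the essential inequality you need is the one above.
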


\begin{proof}
We split the proof in two cases depending on the energy arrival sequence and the structure of the optimal offline algorithm. With the optimal offline algorithm, let $n^{\star}$ be the smallest energy arrival instant index such the energy arrived before time instant $t_{n^{\star}}$,  
$\sum_{i<n^{\star}} E_i$, is sufficient to transmit all $B$ bits before time instant $t_{n^{\star}}$. 
The first case we consider is when transmitting constant power $\frac{\sum_{i<n^{\star}} E_i}{{\cal T}_O}$ for time ${\cal T}_O < t_{n^{\star}}$ is such that ${\cal T}_O \log(1+\frac{\sum_{i<n} E_i}{{\cal T}_O}) = B$, and it does not violate any of the energy constraints till $t_{n^{\star}-1}$. The other case we consider is when the energy constraint is violated at some energy arrival instant $t_s, s < n^{\star}$.

{Case 1:} We first consider the case when the energy arrival sequence $\boldsymbol\sigma$ is such that the optimal offline algorithm transmits constant power $\frac{\sum_{i\le n^{\star}-1}E_i}{{\cal T}_O}$ for time 
$ {\cal T}_O = \sum_{i\le n^{\star}-1} s_i +s_{Last}$ to complete the transmission of $B$ bits without ever violating the energy constraint, where $s_i$ is the inter-arrival time between $i^{th}$ and $i-1^{th}$ energy arrival, and $s_{Last} < s_{n^{\star}}$.
Under this assumption, let at energy arrival instant $t_i$, for the optimal offline algorithm, and the lazy online algorithm, the available energy be ${\tilde E}_i$ and ${\hat E}_i$, while  the residual bits to be transmitted be ${\tilde B}_i$ and ${\hat B}_i$, respectively. To claim the result, we will show that the completion time of the lazy online algorithm is less than two times ${\cal T}_O$ for any $\boldsymbol\sigma$.
Let us first illustrate the $2$-competitiveness of the lazy online algorithm for the following special case, where assume that $E_0+E_1$ (using the optimal offline algorithm) is sufficient to transmit $B$ bits before time $t_2$, i.e. $n^{\star}=2$. 

Let $n^{\star}=2$, and ${\cal T}_O = s_1 + s_{Last}$, $s_{Last} < s_2$. Consider the lazy online algorithm, for which at time $t_1$, ${\hat E}_1 = E_1 + E_0(\frac{T_0-s_1}{T_0})$ and ${\hat B}_1 = \frac{(T_0-s_1)}{T_0}B$, where $T_0$ is the estimated completion time at $t_0$.
Since the optimal offline algorithm transmits uniform power $\frac{E_0+E_1}{s_1 + s_{Last}}$ throughout the transmission, at time $t_1$, ${\tilde E}_1 = (E_0 + E_1) (\frac{s_{Last}}{s_1+s_{Last}})$ and 
${\tilde B}_1 = \frac{s_{Last}}{s_1+s_{Last}}B$.
Let $\phi_1 = \frac{{\hat B}_1}{{\tilde B}_1} = \frac{\frac{(T_0-s_1)}{T_0}}{\frac{s_{Last}}{s_1+s_{Last}}}$ and $\theta_1 = \frac{{\hat E}_1}{{\tilde E}_1} = \frac{E_1 + E_0(\frac{T_0-s_1}{T_0})}{(E_0 + E_1) (\frac{s_{Last}}{s_1+s_{Last}})}$, be the ratio of bits left to be sent and energy remaining at time $t_1$ for the lazy online algorithm and the optimal offline algorithm, respectively. Note that $\theta_1/\phi_1 > 1$, since the estimated completion time with the lazy online algorithm at $t_0$ is $T_0 > s_1$.

Recall that with the optimal offline algorithm, after time $s_1$, $s_{Last}$ amount of time is sufficient to finish the transmission of ${\tilde B}_1$ bits with transmit power ${\tilde E}_1$. Thus, 

\begin{equation}
\label{eq:dummy1}
{\tilde B}_1 = s_{Last} \log \left(1+ \frac{{\tilde E}_1}{s_{Last}}\right). 
\end{equation}
Compared to this, for the lazy online algorithm to finish transmitting ${\hat B}_1$ after $s_1$ with energy ${\hat E}_1$ takes $T_{max}$ amount of time, where 
\begin{eqnarray}\nn
{\hat B}_1 &=& T_{max} \log \left(1+ \frac{{\hat E}_1}{T_{max}}\right), \\ \label{eq:dummy2}
\phi_1 {\tilde B}_1 &  =& T_{max} \log \left(1+ \frac{\theta_1 {\tilde E}_1}{T_{max}}\right).
\end{eqnarray} 
It readily follows that $T_{max} \le \phi_1 s_{Last}$,  since with $\theta_1/\phi_1 > 1$, from (\ref{eq:dummy1}) we have that 
$\phi_1 s_{Last} \log \left(1+ \frac{\theta_1 {\tilde E}_1}{\phi_1 s_{Last}}\right) > \phi_1{\tilde B}_1$. Thus, the total time taken by the lazy online algorithm is $s_1+ \phi_1 s_{Last} = s_1 + \frac{\frac{(T_0-s_1)}{T_0}}{\frac{s_{Last}}{s_1+s_{Last}}}s_{Last} = \frac{(T_0-s_1)}{T_0}(s_1+s_{Last})+ s_1  \le 2s_1+s_{Last} \le 2(s_1+s_{Last}) = 2 {\cal T}_O$.

Now we consider the general energy arrival sequence $\boldsymbol\sigma$.
Let $m \bydef n^{\star}-1$ be the last energy arrival instant index before the completion time ${\cal T}_O$ for the optimal offline algorithm. Let ${\cal T}_O = t_m + s_{Last}$. For each energy arrival instant $t_k, k=0,\dots, m$, 
let $\phi_k = \frac{{\hat B}_k}{{\tilde B}_k}$  and 
$\theta_k = \frac{{\hat E}_k}{{\tilde E}_k} $. 
Note that 
$\phi_m = \frac{\prod_{j=1}^{m} \gamma_j}{\frac{s_{Last}}{\sum_{j=1}^{m} s_j + s_{Last}}}$, 
and $\theta_m = \frac{E_m + \sum_{j=0}^{m-1}E_j \left (\prod_{\ell=1}^{m-j} \gamma_{\ell}\right)}{ (\sum_{j=0}^m E_j) (\frac{s_{Last}}{\sum_{j=1}^{m} s_j+s_{Last}}) }$,
where $\gamma_j = \frac{(T_{j-1}-s_j)}{T_{j-1}}$, and  the estimated 
completion time after energy arrival instant time $t_{j-1}$ with the lazy online algorithm is $T_{j-1}>s_{j}$ . 
Since $\gamma_j <1, \forall \ j$, $\frac{E_m + \sum_{j=0}^{m-1}E_j \left (\prod_{\ell=1}^{m-j} \gamma_{\ell}\right)}{ (\sum_{j=0}^m E_j)\prod_{j=1}^{m} \gamma_j } \ge 1$, and it follows that 
$\frac{\theta_m}{\phi_m} \ge 1$.  

Consider the last energy arrival instant $t_m = \sum_{i=1}^m s_i$ before the completion time for the optimal offline algorithm. From $t_m$ onwards, with the optimal 
offline algorithm ${\tilde B}_m$ bits are transmitted in $s_{Last}$ time duration with constant power $\frac{{\tilde E}_m}{s_{Last}}$, and we have
\begin{equation}
\label{eq:dummy2}{\tilde B}_m = s_{Last} \log \left(1+ \frac{{\tilde E}_m}{s_{Last}}\right).
\end{equation}
Similarly, starting from time $t_m$,  the lazy online algorithm requires $T_{max}$ time to send ${\hat B}_m$ with available energy $ {\hat E}_m$, where 
\begin{eqnarray}\nn
{\hat B}_m &=& T_{max} \log \left(1+ \frac{{\hat E}_m}{T_{max}}\right),\\ \label{eq:dummy10}
\phi_m {\tilde B}_m   &=& T_{max} \log \left(1+ \frac{\theta_m{\tilde E}_m}{T_{max}}\right).
\end{eqnarray} 
With $T_{max}  = \phi_m s_{Last}$, from (\ref{eq:dummy2}) it follows that the R.H.S of (\ref{eq:dummy10}) is greater than $\phi_m {\tilde B}_m$, since $\theta_m/\phi_m > 1$. Noting that $\phi_m s_{Last} = \left(\prod_{j=1}^{m} \gamma_j\right) \left(\sum_{j=1}^{m} s_j + s_{Last}\right) \le \sum_{j=1}^{m} s_j + s_{Last}$ since  $\gamma_j < 1, \ \forall \ j$, we have that the total completion time of the lazy online algorithm is $t_m + T_{max} \le \sum_{j=1}^m s_j +  \phi_m s_{Last} \le 2\sum_{j=1}^m s_j +  s_{Last} < 2 {\cal T}_O$.

Case 2: Now we consider the more general case when the optimal offline algorithm cannot transmit at constant power throughout the transmission time 
because of violation of energy constraint at some energy arrival instant before $t_{n^{\star}}$. Let $t_p$ be the first such time instant. Then from the structure of the optimal offline algorithm \cite{UlukusEH2011b}, we know that the optimal offline algorithm uses up all its energy that has arrived before time $t_p$ by time $t_p$, and starts to transmit after $t_p$ using energy that arrives at or after $t_p$. For example, consider Fig. \ref{fig:proofschematic}, where by time $t_1$, the optimal offline algorithm has used up all the energy that has arrived till time $t_1$ using constant power $P_0 = \frac{E_0}{t_1}$, and uses energy arrived at time $t_1$, $E_1$, to complete the transmission of $B$ bits before the next energy arrival instance with constant power $P_1\ne P_0$. 
We will prove the Theorem for the energy arrival scenario illustrated in Fig. \ref{fig:proofschematic}. Similar to Case 1, the result applies to any general energy arrival sequence, and is not presented here in the interest of space.

As before, let at energy arrival instant $t_1$, for the optimal offline algorithm, and the lazy online algorithm, the available energy be ${\tilde E}_i=E_1$ and ${\hat E}_i$, while  the residual bits to be transmitted be ${\tilde B}_i$ and ${\hat B}_i$, respectively. 
Then by definition, $${\tilde B}_i = ({\cal T}_O -t_1)\log\left(1+\frac{E_1}{{\cal T}_O -t_1}\right),$$ because using energy $E_1$ starting from $t_1$, the optimal offline algorithm finishes transmission by time ${\cal T}_O$. Moreover, ${\tilde B}_i$ also satisfies the relation, $${\tilde B}_i = B -t_1\log\left(1+\frac{E_0}{t_1}\right),$$ since total number of bits is $B$ and bits sent by time $t_1$ is $t_1\log\left(1+\frac{E_0}{t_1}\right)$. Thus, $$B =t_1\log\left(1+\frac{E_0}{t_1}\right)+ ({\cal T}_O -t_1)\log\left(1+\frac{E_1}{{\cal T}_O -t_1}\right).$$ Using the concavity of the $\log$ function, 
\begin{equation}\label{eq:concB}
B \le {\cal T}_O \log\left(1+\frac{E_0+E_1}{{\cal T}_O}\right).
\end{equation} This inequality can also be argued by using the fact that if $E_0+E_1$ amount of energy was available at time $t_0$, then the number of bits sent till time ${\cal T}_O$ by the optimal offline algorithm $\left(={\cal T}_O \log\left(1+\frac{E_0+E_1}{{\cal T}_O}\right)\right)$ cannot be less than $B$. 

Recall that ${\hat B}_1 = \frac{(T_0-t_1)}{T_0}B$, and ${\hat E}_1 = E_1 + E_0(\frac{T_0-t_1}{T_0})$, thus the lazy online algorithm finishes transmission by time $T_{max}$ starting from $t_1$, if  $T_{max}$ is the minimum time such that
\begin{eqnarray}\nn
{\hat B}_1 &\le & T_{max} \log \left(1+ \frac{{\hat E}_1}{T_{max}}\right), \\ \label{eq:dummylblo}
B\left(\frac{T_0-t_1}{T_0}\right)&  \le & T_{max} \log \left(1+ \frac{E_1 + E_0(\frac{T_0-t_1}{T_0})}{T_{max}}\right).
\end{eqnarray} 
Evaluating R.H.S. of (\ref{eq:dummylblo}) with $T_{max} ={\cal T}_O$, we get ${\cal T}_O \log \left(1+ \frac{E_1 + E_0(\frac{T_0-t_1}{T_0})}{{\cal T}_O}\right)$
\begin{eqnarray}\nn
 &\ge & 
{\cal T}_O \log \left(1+ \left(\frac{T_0-t_1}{T_0}\right)\left(\frac{E_1 + E_0}{{\cal T}_O}\right)\right),\\ \nn
&\stackrel{(a)}{\ge} &\left(\frac{T_0-t_1}{T_0}\right){\cal T}_O \log \left(1+ \frac{E_1 + E_0}{{\cal T}_O}\right),\\ \nn
&\stackrel{(b)}{\ge}& \left(\frac{T_0-t_1}{T_0}\right) B,
\end{eqnarray}
where $(a)$ follows from the concavity of $\log$ function, i.e. $\log (1+\alpha x) \ge \alpha \log(1+x)$ for $\alpha <1$, and $(b)$ follows from (\ref{eq:concB}). Thus, $T_{max} = {\cal T}_O$, satisfies (\ref{eq:dummylblo}), and hence  the total time taken by the lazy online algorithm is $\le t_1+ {\cal T}_O$, leading to competitive ratio of $\frac{t_1+{\cal T}_O}{{\cal T}_O} \le 2$, since 
by definition ${\cal T}_O \ge t_1$. 
The proof follows similarly for general energy arrival sequences by considering the last energy arrival instant before completion time for which the optimal offline algorithm uses up all its energy that has arrived till that time.

\end{proof}

\begin{figure} 
\centering
\scalebox{.6}{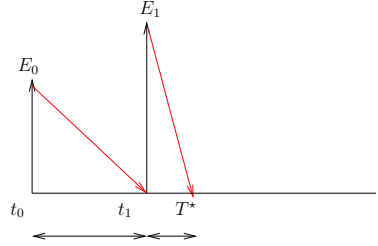}
\caption{Example of energy arrival input where offline algorithm uses up all energy by $t_1$.}
\label{fig:proofschematic}
\end{figure}

{\bf Generalized Lazy Online Algorithm GLO}: Recall that for defining the lazy online algorithm we assumed that  the energy available at the start $E_0$ is sufficient to transmit all $B$ bits eventually. Next, we present a generalized lazy algorithm (GLO) that does not require that assumption. GLO at each energy arrival instant $t_n$ computes $B_{s}(n) = \lim_{T\rightarrow \infty} T\log\left(1+\frac{\sum_{i=0}^n E_i}{T}\right)$, and if $B_{s}(n) \le B$, then GLO does not transmit any power and waits till next arrival instant, while if 
$B_{s}(n) > B$, then GLO uses the lazy algorithm starting from time $t_n$. Essentially, GLO decides to start transmitting at time $t$ if the energy arrived until time $t$ is more than required to send all $B$ bits eventually.

\begin{thm}\label{thm:superlazy} GLO is $2$-competitive. 
\end{thm}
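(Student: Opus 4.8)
The plan is to reduce to Theorem~\ref{thm:lazy}. Let $n_0$ be the first index with $B_s(n_0)>B$; since $B_s(n)=\big(\sum_{i=0}^{n}E_i\big)\log e$, this is the first instant at which the energy harvested so far can ever deliver $B$ bits. By construction GLO transmits nothing on $[0,t_{n_0})$ and, from $t_{n_0}$ on, runs the lazy algorithm of Section~\ref{sec:ub} on the \emph{residual instance} with origin $t_{n_0}$, initial energy $\hat E:=\sum_{i=0}^{n_0}E_i$, bits $B$, and later arrivals $E_{n_0+1},E_{n_0+2},\dots$; since $\hat E\log e=B_s(n_0)>B$, this residual instance meets the hypothesis of Theorem~\ref{thm:lazy}. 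I would record two elementary facts. First, ${\cal T}_O\ge t_{n_0}$: before $t_{n_0}$ the harvested energy is $\sum_{i<n_0}E_i$, with bit capacity $B_s(n_0-1)\le B$, and since a constant transmit power maximises the bits sent before any fixed deadline (concavity of $\log$), not even the optimal offline can have delivered all $B$ bits by $t_{n_0}$. Second, writing $L$ for GLO's completion time measured from $t_{n_0}$ and $\tau_0:=\min\{T:T\log(1+\hat E/T)=B\}$, we have $L\le\tau_0$: each time a new packet of energy arrives the lazy rule re-plans with the same fraction of residual bits but strictly more available energy than on the ``transmit $\hat E/\tau_0$ forever'' trajectory, so its running estimate of the finish time only decreases.

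Next I would split on whether any energy arrives after $t_{n_0}$. If none does, then $L=\tau_0$, and moreover ${\cal T}_O\ge\tau_0$ because bringing all of $\hat E$ forward to time $0$ dominates the true staggered arrivals $E_0,\dots,E_{n_0}$ in cumulative energy and hence cannot increase the offline optimum; combined with ${\cal T}_O\ge t_{n_0}$, GLO's total time is $t_{n_0}+L=t_{n_0}+\tau_0\le{\cal T}_O+{\cal T}_O=2{\cal T}_O$. If energy does arrive after $t_{n_0}$, I would argue by a dichotomy. When $t_{n_0}$ is a small fraction of ${\cal T}_O$, the crude bound $t_{n_0}+L\le t_{n_0}+2{\cal T}_O^{\mathrm{res}}\le t_{n_0}+2{\cal T}_O$ — using Theorem~\ref{thm:lazy} on the residual instance, and ${\cal T}_O^{\mathrm{res}}\le{\cal T}_O$ since the optimal offline schedule for $\boldsymbol\sigma$ is feasible from $t_{n_0}$ for the residual instance (which has at least as much cumulative energy at every elapsed time) — already gives $t_{n_0}+L\le2{\cal T}_O$. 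When instead $t_{n_0}$ is close to ${\cal T}_O$, the optimal offline was essentially energy-starved until $t_{n_0}$, which forces $\hat E$ together with the energy arriving just after $t_{n_0}$ to exceed $B/\log e$ by a definite margin, and then the lazy re-planning drives $L$ down so that $t_{n_0}+L\le2{\cal T}_O$ again. To make this last step quantitative I would re-run the $\phi_k/\theta_k$ bookkeeping from the proof of Theorem~\ref{thm:lazy} on the residual instance, but played against the tail (after $t_{n_0}$) of the optimal offline algorithm for $\boldsymbol\sigma$ — which at $t_{n_0}$ has at most $B$ residual bits and at most $\hat E$ residual energy and finishes $({\cal T}_O-t_{n_0})$ later — using the concavity step $\phi s_{Last}\log\!\big(1+\tfrac{\theta\tilde E}{\phi s_{Last}}\big)\ge\phi\tilde B$ already used there, together with the fact that GLO enters $t_{n_0}$ with at least as much leftover energy per leftover bit as the offline, and falling back on $L\le\tau_0$ in the sub-regime where that comparison degrades.

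The crux, and the only non-routine part, is exactly the ``$t_{n_0}$ close to ${\cal T}_O$ with later arrivals'' case: a naive chaining (\,idle cost $t_{n_0}$, plus $2\times$ the residual optimum, plus residual optimum $\le{\cal T}_O$\,) only yields $3$-competitiveness, so one must show that the idle loss $t_{n_0}$ and the factor-$2$ loss of the lazy phase are never both incurred in full — i.e.\ that a large $t_{n_0}$ forces an energy-rich residual and hence a small $L$. Once this quantitative trade-off is established, $t_{n_0}+L\le2{\cal T}_O$ holds in all cases, so GLO is $2$-competitive; and the assumption $B\le\lim_{T\to\infty}T\log(1+\tfrac{E_0}{T})$ of Theorem~\ref{thm:lazy} is automatically dropped, since it just says $n_0=0$, i.e.\ GLO coincides with the lazy algorithm run from time $0$.
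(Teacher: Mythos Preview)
Your ``no later arrivals'' sub-case is correct and in fact cleaner than the paper's treatment. The gap is in the other sub-case. Your dichotomy does not close: in the ``$t_{n_0}$ small'' branch you write $t_{n_0}+L\le t_{n_0}+2{\cal T}_O^{\mathrm{res}}\le t_{n_0}+2{\cal T}_O$ and then assert this ``already gives $t_{n_0}+L\le 2{\cal T}_O$'', but that last step is simply false for any $t_{n_0}>0$. Your justification $\mathcal{T}_O^{\mathrm{res}}\le\mathcal{T}_O$ (shift the whole offline schedule forward by $t_{n_0}$) is correct, but you would need the stronger $\mathcal{T}_O^{\mathrm{res}}\le\mathcal{T}_O-t_{n_0}/2$, which does not follow. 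For the ``$t_{n_0}$ large'' branch you fall back on rerunning the $\phi/\theta$ bookkeeping against the \emph{tail} of the offline schedule, and you assert that at $t_{n_0}$ GLO ``has at least as much leftover energy per leftover bit as the offline''. That claim is not justified and can fail: on $[0,t_{n_0})$ the offline may transmit at low power (high bits-per-joule), leaving itself with a \emph{higher} residual energy-to-bit ratio than GLO's $\hat E/B$. So the $\theta/\phi\ge1$ inequality you need at the anchor point $t_{n_0}$ is not available, and the whole trade-off argument remains a promise.

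The paper's proof avoids both issues by anchoring not at $t_{n_0}$ but at $t_{n^\star-1}$, the last arrival strictly before ${\cal T}_O$. Two things make this work. First, $t_{n^\star-1}<{\cal T}_O$ by definition, so the idle-plus-run budget is automatically $<2{\cal T}_O$ once one shows the lazy run finishes within ${\cal T}_O$ of $t_{n^\star-1}$. Second, at $t_{n^\star-1}$ the $\phi_m/\theta_m$ comparison is against the \emph{full} offline (from $t_0$), not its tail; because GLO entered the lazy phase at $t_{n_0}$ carrying the entire accumulated energy $\sum_{i\le n_0}E_i$, the same term-by-term inequality $\gamma_j<1$ that drives Theorem~\ref{thm:lazy} yields $\theta_m/\phi_m\ge1$ here as well, and then $T_{\max}\le\phi_m s_{Last}\le{\cal T}_O$. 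In short: do not split on the size of $t_{n_0}$, and do not compare to the offline tail; instead carry Theorem~\ref{thm:lazy}'s bookkeeping forward to $t_{n^\star-1}$ against the original offline.
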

\begin{proof} 
Let ${\cal T}_O$ be the time taken  by the optimal offline algorithm to finish transmission. Let $n^{\star}$ be such that $t_{n^{\star}-1} \le  {\cal T}_O < t_{n^{\star}}$, i.e. the optimal offline algorithm uses energy that arrives at time instant $t_{n^{\star}-1}$, but energy arriving at time instant $t_{n^{\star}}$ is not needed to finish transmission before $t_{n^{\star}}$. Thus, clearly, energy that arrives till time $t_{n^{\star}-1}$, $\sum_{i=0}^{n^{\star}-1} E_i$, is sufficient to transmit the $B$ bits. Moreover, since $B$ bits have been sent with total energy $\sum_{i=0}^{n^{\star}-1} E_i$, where energy $E_i$ arrived at time instant $t_i$, surely, $B_{s}(n^{\star}-1) = \lim_{T\rightarrow \infty} T\log\left(1+\frac{\sum_{i=0}^{n^{\star}-1} E_i}{T}\right) > B$, since otherwise the optimal offline algorithm also cannot finish transmission without using energy $E_{n^{\star}}$. Therefore, GLO starts transmitting by at least time instant $t_{n^{\star}-1}$.

The two extreme case of interest with GLO are: 1) $B_s(0) > B$, and 2) $B_s(n^{\star}-1)> B, B_s(n^{\star}-2) \le B$, i.e. GLO starts transmitting only at time instant $t_{n^{\star}-1}$. 
Case 1: If $B_s(0) > B$, then GLO is equivalent to the lazy online algorithm, and is $2$-competitive from Theorem \ref{thm:lazy}. Case 2: $B_s(n^{\star}-1)> B, B_s(n^{\star}-2) \le B$. For this case, since all the energy used by the optimal offline algorithm ($\sum_{i=0}^{n^{\star}-1} E_i$) is available at $t_{n^{\star}-1}$ with GLO, starting from $t_{n^{\star}-1}$ using the lazy online algorithm, GLO transmits constant power $\frac{\sum_{i=0}^{n^{\star}-1} E_i}{T_{max}}$ for time $T_{max}$, where $T_{max}\log\left(1+\frac{\sum_{i=0}^{n^{\star}-1} E_i}{T_{max}}\right) = B$, even if no further energy arrives after $t_{n^{\star}-1}$. 
Since the optimal offline algorithm is able to finish transmission by time ${\cal T}_O$ using energy $\sum_{i=0}^{n^{\star}-1} E_i$ with energy $E_i$ arriving at time $t_i$, GLO should be able to finish by time ${\cal T}_O$ starting from $t_{n^{\star}-1}$ since it has all the energy 
$\sum_{i=0}^{n^{\star}-1} E_i$ required to begin with at  $t_{n^{\star}-1}$. Thus, it follows that $T_{max} \le {\cal T}_O$, and the total completion time for GLO, $T_{GLO} \le t_{n^{\star}-1} + {\cal T}_O < 2{\cal T}_O$, since by definition $t_{n^{\star}-1} < {\cal T}_O$. 

For any other intermediate case of $B_s(k)> B, 0< k < n^{\star}-1$, exactly following the proof of Theorem \ref{thm:lazy}, it can be shown that starting from $t_k$, the time taken by the GLO to finish transmission is less than ${\cal T}_O + t_{n^{\star}-1} - t_k$, and the total time taken by the GLO is $T_{GLO}< t_k + {\cal T}_O + t_{n^{\star}-1} - t_k < 2{\cal T}_O$, completing the proof.

\end{proof}

\section{Examples}
We present two concrete examples to illustrate the performance of GLO in the SISO channel model.
In the first example, we consider $B=100$ bits and energy arrival instants $t_0=0, t_n = 2^n$ seconds for $n=1,\dots, 6$, with energy amounts $E_n = 2^{n+1}$ J, for $n=0,\dots,6$, as depicted in Fig. \ref{fig:exm1}. For the optimal offline algorithm, the minimum time index $n^{\star}$ for which the energy arrived before it is sufficient to transmit all $B$ bits before $t_{n^{\star}}$ is $n^{\star}=6$, since the 
total energy arrived before $t_5$ ($62$ J) is not sufficient to transmit $100$ bits as $\lim_{T\rightarrow} T \log\left(1+\frac{62}{T}\right) < 100$, while the total energy arrived by time instant $t_5$ is $126$ J, and $T$ such that $T \log\left(1+\frac{126}{T}\right) = B$ is $63.2 < t_6$. 
With this energy input, it is easy to verify that for $n=0,\dots, 4,$ the optimal offline algorithm uses up all its energy that arrives at time instant $t_n$ by time instant $t_{n+1}$ by transmitting power $P_n=1$, since at time $t_{n+1}$ exponentially more energy arrives. 
In particular, we see that $P_1=P_2=\dots=P_5=1$, while the power transmitted starting from time instant $t_5$ is $P_6 = 3.83$ using $64$ J of energy, and the transmission finishes by $16.7$ sec starting from time instant $t_5$ as depicted in Fig. \ref{fig:exm1}. Thus the total time taken by the optimal offline algorithm is $62+16.7=78.7$ sec. In contrast with the GLO, $B_s(n) < B$ for $n=0,\dots, 4$, and $B_s(5) > B$, thus the GLO starts transmitting at time instant $t_5$ with energy $126$ J and transmitting uniform power $126/63.2$ to finish transmission by time $63.2$ seconds starting from $t_5$. Thus the total time taken by GLO is $125.2$ sec, which is less than two times the time taken by the optimal offline algorithm.

The second example considers the scenario where the optimal offline algorithm transmits uniform power throughout the transmission without violating any energy constraint. Let $B=10$ bits,  $t_n = n$ for $n=0,\dots, 8$, and $E_0 =2, E_n = 1$ J for $n=1,\dots, 8$ as shown in Fig. \ref{fig:exm2}. One can check that $n^{\star}=8$ for the optimal offline algorithm, since energy arrived till $t_8$ is $10$ J and $T$ such that $T \log\left(1+\frac{10}{T}\right) = 10$ is $9.9$, while for any other $n, n<8$, using energy arrived till then, $10$ bits cannot be transmitted before $t_{n+1}$. Moreover, transmitting uniform power $10/9.9 = 1.01$ does not violate any energy constraint. So the optimal offline transmission time is $9.9$ sec. With GLO, $B_s(5) > 10$ since energy arrived till $t_5$ is $7$ and $\lim_{T\rightarrow} T \log\left(1+\frac{7}{T}\right) > 10$, and hence GLO starts transmitting at $t_5$ using the lazy online algorithm. As defined before, let $T_n$ be the estimated transmission completion time at $t_n$ for the GLO, and the power used between $t_n$ and $t_{n+1}$ by GLO be $P_n$. Then for this example, $T_5= 352.8$ and $P_5 = 7/352.8$, $T_6 = 24.5$ and $P_6 = .3257$, $T_7 = 12.9$ and $P_7 = .6705$, and $T_8 = 8.4$ and $P_8 = 1.06$. Thus the transmission finishes after $8.4$ seconds starting from $t_8$, and the total completion time for GLO is $9 + 8.4= 17.4$ which is once again less than two times the time taken by the optimal offline algorithm.

\begin{figure} [ht]
\centering
\scalebox{.67}{\input{latexexm1.pstex_t}}
\caption{Comparison of optimal offline and GLO.}
\label{fig:exm1}
\end{figure}

\begin{figure} [ht]
\centering
\scalebox{.85}{\input{exm2.pstex_t}}
\caption{Comparison of optimal offline and GLO.}
\label{fig:exm2}
\end{figure}

{\it Discussion:} In this section, we presented an online algorithm for minimizing the completion time for a SISO channel with an energy harvesting source. The basic idea behind the lazy online algorithm is to be conservative in power transmission and choose transmission power that minimizes the left over transmission time assuming that no further energy arrivals are going to happen in future. Clearly, the algorithm is optimal for cases when time intervals between energy arrivals are very large, however, we show that even when energy arrivals happen very close to each other, the time taken by the online algorithm is at the maximum only two times that of an optimal offline algorithm. 
Moreover, note that the competitive ratio of two for the lazy online algorithm is quite conservative because recall that we upper bounded lot of factors that are less than one by one, and in reality we expect a competitive ratio of less than two.

\section{Gaussian Multiple-access channel}
In this section, we consider a two-user  GMAC, where two users with energy harvesting capability are trying to send $B_i, i=1,2$ bits to a single destination in as minimum time as possible. With the GMAC, if $x_1$ and $x_2$ are transmitted signals from the two users, then the received signal at the destination is given by $y=x_1+x_2+n$, where $n$ is AWGN with unit variance.
We assume that the energy arrival sequences at the two users have no relation with each other. 
Let $P_1$ and $P_2$ be the power transmitted by the two users. Then we know the sum capacity of the two-user GMAC is given by $C_{GMAC}(P_1,P_2) = \left\{(R_1, R_2): R_1+R_2 \le \log(1+ (P_1+P_2))\right\}$. However, as one can see immediately, to achieve any point on $C_{GMAC}(P_1,P_2)$, the two users need to know each other's transmitted power $P_1$ or $P_2$, and consequently each others transmission rate $R_1$ or $R_2$.
In an energy harvesting scenario, assuming such coordination is unreasonable because $P_1$ and $P_2$ depend on the respective energy arrival information at each user, and lot of information needs to be exchanged between the two users in real-time for accomplishing coordination. 

To consider the energy harvesting scenario with GMAC, we consider the more realistic scenario of  uncoordinated GMAC, where no information is exchanged between the two users. Assuming no coordination between the two users, the simplest strategy for each user is to assume that the other user is going to transmit at the same power as itself, and transmit at an appropriate rate. \footnote {Note that in the energy harvesting setup this corresponds to assuming that the energy available at the two users is identical.} In particular, user $i$ assumes that the capacity region is $C_{GMAC}(P_i,P_i), i=1,2$, and the transmission rate chosen by user $i$ is 
$R_i = \frac{1}{2}\log(1+2P_i)$. With joint decoding at the destination, one can check that $R_1+R_2 \in C_{GMAC}(P_1,P_2)$ using concavity of the $\log$ function as shown in \cite{SibiUncoordinatedMAC2011}.
Surprisingly, this simple strategy has been shown to be sum-rate optimal \cite{SibiUncoordinatedMAC2011} from an outage point of view, i.e. the sum rate obtained by this strategy cannot be improved by any strategy for which $R_1+R_2 \in C_{GMAC}(P_1,P_2)$. Note that rates 
$R_i = \frac{1}{2}\log(1+2P_i)$ can also be achieved using time sharing, however, that entails some form of coordination.

Hence for our model with two energy harvesting users, we assume that if user $i$ uses power $P_i$ for time duration $T_i$, then the number of bits transmitted from user $i$ to the single receiver is $\frac{T_i}{2}\log(1+2P_i)$ bits. The most important feature of uncoordinated two-user GMAC is that it decouples the interdependence of two transmission rates, and rate $R_i$ only depends only on power $P_i$. Essentially, with no coordination allowed between the two users, information theoretically we get two parallel channels, one each between the two users and the destination. Hence, for finding an online algorithm for minimizing transmission times of $B_i, i=1,2$ bits from the two users on an uncoordinated GMAC, we need to find a single online algorithm that can be used by both user $1$ and $2$, with a slightly different objective function as compared to (\ref{optprob}).

Hence the only difference in the problem formulation between the single transmitter-receiver (SISO) case (\ref{optprob}), and the uncoordinated two-user GMAC, is the rate function $R(T,P)$. With two-user GMAC rate function $R_{GMAC}(T,P)$ is a scaled function of the rate function of SISO case $R(T,P)$, where $R_{GMAC}(T,P)= R(\frac{T}{2}, 2P)$. 
Thus, for each user, the optimization problem ${\cal T}$ to find the optimal total transmission time is
\begin{equation}\label{optprobGMAC}  {\cal T}_O = \min_{\begin{array}{c}P_i, \ell_i, i=1,2,\dots,N, \\ B(T)\ge B, E(t)\le \sum_{i, i \le t} E_i\end{array}} T,
\end{equation}
where number of bits transmitted 
 until time $t$ is \[B(t) =\sum_{i=1}^{{\bar i}} \frac{\ell_i}{2} \log(1+2P_i) +  \frac{(t-\sum_{i=1}^{{\bar i}} \ell_i)}{2}\log(1+2P_{{\bar i}+1}),\]
and the energy used up until time $t$ is $E(t) =\sum_{i=1}^{{\bar i}} \ell_i P_i + P_{{\bar i}+1}(t-\sum_{i=1}^{{\bar i}} \ell_i)$, and $P_1, \dots, P_N$ be the sequence of transmitted power with time spent between the
$i + 1^{th}$ and $i^{th}$ change as $\ell_i, i = 1, \dots,N$.
%

 Since the basic structure of $R_{GMAC}(T,P)$ is similar to $R(T,P)$, the lazy online and the randomized lazy online algorithm described in the previous section are applicable for the two-user GMAC, where they are applied to each user independently. Moreover, not surprisingly, the competitive ratio of both the lazy online and the randomized lazy online algorithm is $2$ in the two-user GMAC, similar to the SISO case. The derivation required to show this result is identical to Theorem \ref{thm:lazy}, and \ref{thm:superlazy}, and we just state the result without the proof for the sake of brevity.

\begin{thm}\label{thm:lazyGMAC} The generalized online algorithm $GLO$ is $2$-competitive for both users in a two-user GMAC. 
\end{thm}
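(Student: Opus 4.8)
The plan is to reduce the uncoordinated two-user GMAC problem to the single-user problem already treated in Theorems~\ref{thm:lazy} and~\ref{thm:superlazy}, by means of a deterministic change of variables that exploits the identity $R_{GMAC}(T,P)=R(\tfrac{T}{2},2P)$ observed above, together with the fact (established in this section, following \cite{SibiUncoordinatedMAC2011}) that the uncoordinated GMAC decouples into two parallel single-user channels, so the two users may be analyzed independently. Thus it suffices to prove the statement for a single fixed user with $B$ bits to deliver and energy arrivals $\{(t_i,E_i)\}_{i\ge 0}$.

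Given any power schedule $P(\cdot)$ on the time axis $t$, define the rescaled schedule $Q(\tau):=2P(2\tau)$ on the time axis $\tau=t/2$. Over any interval, the bits carried are $\int \tfrac12\log(1+2P)\,dt=\int\log(1+Q)\,d\tau$ and the energy spent is $\int P\,dt=\int Q\,d\tau$; hence $P(\cdot)$ is feasible for the GMAC problem (\ref{optprobGMAC}) with arrivals $\{(t_i,E_i)\}$ if and only if $Q(\cdot)$ is feasible for the SISO problem (\ref{optprob}) with arrivals $\{(t_i/2,E_i)\}$, and the completion time on the $\tau$-axis is exactly half that on the $t$-axis. This bijection between feasible schedules, applied in particular to the optimum, gives ${\cal T}_O^{GMAC}\big(\{(t_i,E_i)\},B\big)=2\,{\cal T}_O^{SISO}\big(\{(t_i/2,E_i)\},B\big)$, and it shows that the structural description of the optimal offline algorithm transfers verbatim (it depends only on the equivalence of the two optimization problems, not on the particular form of $R$).

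It then remains to check that GLO commutes with this rescaling. At arrival $t_i$, GLO's decision to transmit is governed by whether $B^{GMAC}_s(i)=\lim_{T\to\infty}\tfrac{T}{2}\log\!\big(1+\tfrac{2\sum_{j\le i}E_j}{T}\big)=\big(\sum_{j\le i}E_j\big)\log e$ exceeds $B$; this is the very same quantity $B_s(i)$ that governs GLO's decision at the rescaled arrival $t_i/2$ in the SISO instance. Once transmitting, GLO picks $T_i$ with $R_{GMAC}(T_i,\hat E_i)=\hat B_i$ and power $\hat E_i/T_i$, which under $\tau_i=T_i/2$ is exactly $R(\tau_i,\hat E_i)=\hat B_i$ with power $\hat E_i/\tau_i=2\hat E_i/T_i$ — the rescaled power. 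Hence running GLO on the GMAC instance produces precisely the rescaled image of running GLO on the SISO instance $\{(t_i/2,E_i)\}$, so $T_{GLO}^{GMAC}=2\,T_{GLO}^{SISO}$. Combining, $T_{GLO}^{GMAC}/{\cal T}_O^{GMAC}=T_{GLO}^{SISO}/{\cal T}_O^{SISO}\le 2$ by Theorem~\ref{thm:superlazy}, and since the two users' problems are independent, this holds for each of the two users, proving Theorem~\ref{thm:lazyGMAC}.

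I expect no essential new difficulty: the only point requiring care is the bookkeeping under rescaling (the energy amounts $E_i$ are unchanged and only the instants $t_i$ are halved), and confirming that GLO's internal decision rule is invariant under it. As an alternative to the reduction, one could simply re-run the proofs of Theorems~\ref{thm:lazy} and~\ref{thm:superlazy} line by line with $\log(1+P)$ replaced by $g(P):=\tfrac12\log(1+2P)$; here $g$ is concave with $g(0)=0$, so $g(\alpha P)\ge\alpha g(P)$ for $\alpha\in[0,1]$, which is precisely the inequality used in step $(a)$ of the proof of Theorem~\ref{thm:lazy}, and every remaining step is unchanged.
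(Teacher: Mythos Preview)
Your proposal is correct. The paper itself does not give a proof of Theorem~\ref{thm:lazyGMAC}; it simply states that ``the derivation required to show this result is identical to Theorem~\ref{thm:lazy} and~\ref{thm:superlazy}'' and omits it. Your second alternative (re-running the SISO proofs with $g(P)=\tfrac12\log(1+2P)$ in place of $\log(1+P)$, using that $g$ is concave with $g(0)=0$) is exactly what the paper has in mind.

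Your primary argument---the change of variables $\tau=t/2$, $Q(\tau)=2P(2\tau)$, reducing the GMAC instance $\{(t_i,E_i)\}$ to the SISO instance $\{(t_i/2,E_i)\}$---is a cleaner formalization of why the proofs carry over verbatim: it makes the transfer a one-line consequence of Theorem~\ref{thm:superlazy} rather than a re-derivation. The only cosmetic slip is the notation $R_{GMAC}(T_i,\hat E_i)$, since $R_{GMAC}$ takes power, not energy, in its second argument; you mean $\tfrac{T_i}{2}\log(1+2\hat E_i/T_i)=\hat B_i$, which is what you compute with anyway.
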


Next, we present a lower bound on the competitive ratio achievable by any of the two users in a two-user GMAC. The lower bound in a two-user GMAC is slightly different compared to SISO case, because of the different rate function. 
\begin{thm}\label{thm:lbGMAC} Let an online algorithm $A$ be $r_A$ competitive for solving ${\cal T}$  (\ref{optprobGMAC}) for any of the users in a 2-user GMAC. Then $r_A \ge 1.356$.
\end{thm}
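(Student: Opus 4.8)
The plan is to follow the proof of Theorem~\ref{thm:lb} essentially line for line, the only change being that the SISO rate function $R(T,P)=T\log(1+P)$ is replaced throughout by the uncoordinated two-user GMAC rate function $R_{GMAC}(T,P)=\tfrac{T}{2}\log(1+2P)$. Since the uncoordinated GMAC decouples into two parallel point-to-point channels — user $i$'s rate $R_i=\tfrac12\log(1+2P_i)$ depends on $P_i$ alone — it suffices to run the argument for a single user against an adversary controlling that user's energy arrivals; the other user's arrival sequence is irrelevant, which is exactly why the theorem is stated ``for any of the users''.

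First I would fix, for the user in question, the same two-point family of arrival sequences used in Theorem~\ref{thm:lb}: $\boldsymbol\sigma_1=(e_0,e_1,0,\dots)$ with arrivals at $t_0=0$ and $t_1=1$, and $\boldsymbol\sigma_2=(e_0,0,\dots)$ with the single arrival at $t_0=0$. Since no energy arrives after $t=1$ in either sequence, an optimal online algorithm can be taken, without loss and with this only decreasing the lower bound, to transmit at constant power $\alpha e_0$ on $[0,1]$ for some $\alpha\in[0,1]$ and then, from $t=1$, to finish the residual bits in minimum time using all energy then on hand; this collapses the class of online algorithms to the single scalar $\alpha$. Writing $B^{-}(\alpha)=B-\tfrac12\log(1+2\alpha e_0)$ for the bits left at $t=1$, and recalling that by the monotone structure of the optimal offline algorithm \cite{UlukusEH2011b} together with $e_1\ge e_0$ the offline policy empties $e_0$ by $t=1$ (so $\beta(\boldsymbol\sigma_1)=1$), the four quantities needed are obtained by inverting $R_{GMAC}$: $\mathcal{T}_O(\boldsymbol\sigma_2)$ solves $B=\tfrac{T}{2}\log(1+2e_0/T)$; $\mathcal{T}_O(\boldsymbol\sigma_1)=1+\tau_O$ with $B-\tfrac12\log(1+2e_0)=\tfrac{\tau_O}{2}\log(1+2e_1/\tau_O)$; and $\mathcal{T}_A(\boldsymbol\sigma_i)=1+\tau_i(\alpha)$ with $B^{-}(\alpha)=\tfrac{\tau_i}{2}\log\!\big(1+2E_i(\alpha)/\tau_i\big)$, where $E_1(\alpha)=(1-\alpha)e_0+e_1$ and $E_2(\alpha)=(1-\alpha)e_0$.

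Next I would pick $e_0,e_1,B$ adversarially so that the two conditions of Theorem~\ref{thm:lb} hold — $1\ll\mathcal{T}_O(\boldsymbol\sigma_2)<\infty$, $\beta(\boldsymbol\sigma_1)=1$, and $\mathcal{T}_O(\boldsymbol\sigma_1)\ll\mathcal{T}_O(\boldsymbol\sigma_2)$ — by taking $B$ slightly below the asymptotic limit $B_{lim}=\lim_{T\to\infty}\tfrac{T}{2}\log(1+2e_0/T)=e_0\log e$ (for instance $e_0=2$, $e_1=4$, $B$ a little under $2\log e$); the check $\tfrac12\log(1+2e_0)<B$ that keeps the offline policy from finishing before $t=1$ is immediate. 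With such a triple,
\[
r_A\ \ge\ \min_{\alpha\in[0,1]}\max\left\{\frac{\mathcal{T}_A(\boldsymbol\sigma_1)}{\mathcal{T}_O(\boldsymbol\sigma_1)},\ \frac{\mathcal{T}_A(\boldsymbol\sigma_2)}{\mathcal{T}_O(\boldsymbol\sigma_2)}\right\},
\]
a one-dimensional minimization whose optimum is attained at the $\alpha$ where the two ratios coincide; evaluating it numerically gives the claimed bound $r_A\ge1.356$ (as in the SISO case one also checks, numerically, that rescaling $e_0,e_1,B$ yields at most marginal improvement).

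The computations are elementary, and the only points needing care — and the reason the GMAC constant $1.356$ differs from the SISO constant $1.38$ — are that the finite-$T$ values of $\tfrac{T}{2}\log(1+2P)$ do not simply rescale from the SISO formulas (they agree only in the $T\to\infty$ limit), so all four quantities above must be recomputed from scratch, and that one must confirm the offline policy on $\boldsymbol\sigma_1$ is the ``Case 2'' constraint-violation policy that empties $e_0$ at $t=1$. I expect the main, and rather mild, obstacle to be identifying the adversarial triple $(e_0,e_1,B)$ that maximizes the min--max value: exactly as in Theorem~\ref{thm:lb}, this is done numerically rather than in closed form, and one verifies that a particular choice is essentially optimal.
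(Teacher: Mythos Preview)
Your proposal is correct and follows essentially the same approach as the paper: reduce to a single user via the decoupling of the uncoordinated GMAC, reuse the two-sequence adversary $\boldsymbol\sigma_1=(e_0,e_1,0,\dots)$, $\boldsymbol\sigma_2=(e_0,0,\dots)$ from Theorem~\ref{thm:lb} with the rate function replaced by $R_{GMAC}(T,P)=\tfrac{T}{2}\log(1+2P)$, collapse the online algorithm to the scalar $\alpha$, and evaluate the resulting one-dimensional min--max numerically. The paper in fact fixes the very triple you suggest, $e_0=2$, $e_1=4$, $B=2.8$, obtaining $\mathcal{T}_O(\boldsymbol\sigma_1)=2.05$, $\mathcal{T}_O(\boldsymbol\sigma_2)=64.92$, optimal $\alpha\approx 0.08$, and $r\ge 1.356$.
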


\begin{proof} The proof applies to any of the two users of GMAC. Following the proof of Theorem \ref{thm:lb}, we consider  two energy sequences $\boldsymbol\sigma_1= (e_0, e_1, 0,\dots, 0 \dots)$, $\boldsymbol\sigma_2= (e_0, 0, 0,\dots, 0 \dots )$, where $e_0=2$, $e_1=4$, $B = 2.8$ bits. 
Let any online algorithm $A$ spend $\alpha$ amount of energy in time $t=0$ to $t=1$, and let $A$ know the future energy arrivals after time $t=1$.
With a slightly different rate function in a two-user GMAC compared to the SISO case (Theorem \ref{thm:lb}), ${\cal T}_O(\boldsymbol\sigma_1)=2.05$, while ${\cal T}_O(\boldsymbol\sigma_2)=64.92$.  
An online algorithm has to choose $\alpha$ to $\min_{\alpha\in[0,1]}\max \left\{\frac{{\cal T}_A(\boldsymbol\sigma_1)}{{\cal T}_O(\boldsymbol\sigma_1)}, \frac{{\cal T}_A(\boldsymbol\sigma_2)}{{\cal T}_O(\boldsymbol\sigma_2)} \right\}$. This is a one dimensional optimization problem that can be solved easily. In Fig. \ref{fig2}, we plot the $\max \left\{\frac{{\cal T}_A(\boldsymbol\sigma_1)}{{\cal T}_O(\boldsymbol\sigma_1)}, \frac{{\cal T}_A(\boldsymbol\sigma_2)}{{\cal T}_O(\boldsymbol\sigma_2)} \right\}$ for $B=2.8$ bits, and $e_1=2$ and $e_2=4$ as a function of $\alpha$, and observe that the optimal $\alpha = .08$, and $r \ge 1.356$.
\end{proof}

{\it Discussion:} In this section, we presented an online algorithm for minimizing the completion time in a two-user GMAC, where both the users harvest energy from nature. We considered the uncoordinated GMAC scenario, where the two users cannot exchange information about their transmission rate and power. 
Information theoretically, without coordination, the two-user GMAC is equivalent to two single user channels with individual achievable rates having no interdependence on each other \cite{SibiUncoordinatedMAC2011}. Thus, the lazy online algorithm proposed for the SISO channel readily applies to the two-user GMAC with the same competitive ratio. We also derived a lower bound on the competitive ratio similar to the SISO case, which is little different compared to the SISO because of slightly different achievable rate expression.

\begin{figure}
\centering
\includegraphics[width=3.4in]{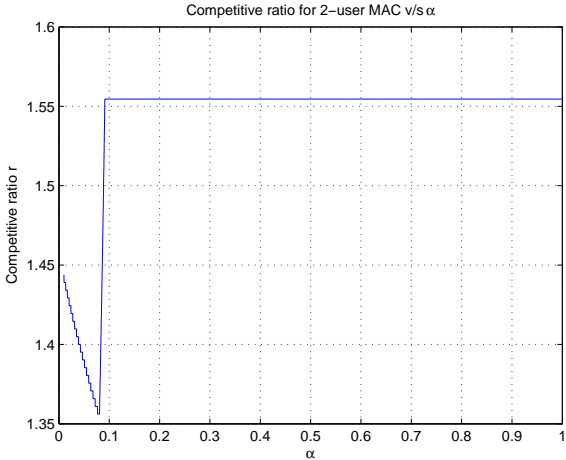}
\caption{Plot of $\max \left\{\frac{{\cal T}_A(\boldsymbol\sigma_1)}{{\cal T}_O(\boldsymbol\sigma_1)}, \frac{{\cal T}_A(\boldsymbol\sigma_2)}{{\cal T}_O(\boldsymbol\sigma_2)} \right\}$ versus $\alpha$ for $e=3$, and  $B=4.2$.}
\label{fig2}
\end{figure}
%


  
\section{Conclusions}
In this paper, we presented an online algorithm for minimizing the transmission time of fixed number of bits in a SISO channel and two-user GMAC with energy harvesting. The proposed online algorithm is a best effort strategy that schedules its transmission at any time instant to minimize the transmission time assuming that no further energy arrival is going to happen in future. Even though this algorithm is quite conservative, we show that irrespective of the energy sequence realization, the transmission time taken by the online algorithm is no more than two times the transmission time taken by an optimal offline algorithm that knows all the energy arrivals in future.
We also derived a lower bound on the performance of any online algorithm in terms of the ratio between the transmission time taken by the online algorithm and the optimal offline (non-causal) algorithm. The derived lower bound indicates that no matter how smart an online algorithm is it still has to pay a fixed penalty with respect to the optimal offline algorithm. 
More importantly, both the lower and upper bound are universal in nature, i.e. they do not depend on the parameters of the system model.

\bibliographystyle{../../../IEEEtran}
\bibliography{../../../IEEEabrv,../../../Research}

\begin{thebibliography}{10}
\providecommand{\url}[1]{#1}
\csname url@samestyle\endcsname
\providecommand{\newblock}{\relax}
\providecommand{\bibinfo}[2]{#2}
\providecommand{\BIBentrySTDinterwordspacing}{\spaceskip=0pt\relax}
\providecommand{\BIBentryALTinterwordstretchfactor}{4}
\providecommand{\BIBentryALTinterwordspacing}{\spaceskip=\fontdimen2\font plus
\BIBentryALTinterwordstretchfactor\fontdimen3\font minus
  \fontdimen4\font\relax}
\providecommand{\BIBforeignlanguage}[2]{{%
\expandafter\ifx\csname l@#1\endcsname\relax
\typeout{** WARNING: IEEEtran.bst: No hyphenation pattern has been}%
\typeout{** loaded for the language `#1'. Using the pattern for}%
\typeout{** the default language instead.}%
\else
\language=\csname l@#1\endcsname
\fi
#2}}
\providecommand{\BIBdecl}{\relax}
\BIBdecl

\bibitem{UlukusEH2011b}
\BIBentryALTinterwordspacing
J.~Yang and S.~Ulukus, ``Optimal packet scheduling in an energy harvesting
  communication system,'' \emph{CoRR}, vol. abs/1010.1295, 2010. [Online].
  Available: \url{http://arxiv.org/abs/1010.1295}
\BIBentrySTDinterwordspacing

\bibitem{YenerEH2011}
\BIBentryALTinterwordspacing
K.~Tutuncuoglu and A.~Yener, ``Optimum transmission policies for battery
  limited energy harvesting nodes,'' \emph{CoRR}, vol. abs/1010.6280, 2010.
  [Online]. Available: \url{http://arxiv.org/abs/1010.6280}
\BIBentrySTDinterwordspacing

\bibitem{ZhangEH2011}
\BIBentryALTinterwordspacing
C.~Huang, R.~Zhang, and S.~Cui, ``Throughput maximization for the gaussian
  relay channel with energy harvesting constraints,'' \emph{CoRR}, vol.
  abs/1109.0724, 2011. [Online]. Available:
  \url{http://arxiv.org/abs/1109.0724}
\BIBentrySTDinterwordspacing

\bibitem{UlukusEH2011c}
\BIBentryALTinterwordspacing
O.~Ozel, K.~Tutuncuoglu, J.~Yang, S.~Ulukus, and A.~Yener, ``Transmission with
  energy harvesting nodes in fading wireless channels: Optimal policies,''
  \emph{CoRR}, vol. abs/1106.1595, 2011. [Online]. Available:
  \url{http://arxiv.org/abs/1106.1595}
\BIBentrySTDinterwordspacing

\bibitem{UlukusEH2011a}
O.~{Ozel} and S.~{Ulukus}, ``{AWGN Channel under Time-Varying Amplitude
  Constraints with Causal Information at the Transmitter},'' \emph{ArXiv
  e-prints}, Dec. 2011.

\bibitem{ChaporkarEH2011}
\BIBentryALTinterwordspacing
A.~Sinha and P.~Chaporkar, ``Optimal power allocation for renewable energy
  source,'' \emph{CoRR}, vol. abs/1110.2288, 2011. [Online]. Available:
  \url{http://arxiv.org/abs/1110.2288}
\BIBentrySTDinterwordspacing

\bibitem{YenerIntChan2011}
\BIBentryALTinterwordspacing
K.~Tutuncuoglu and A.~Yener, ``Sum-rate optimal power policies for energy
  harvesting transmitters in an interference channel,'' \emph{CoRR}, vol.
  abs/1110.6161, 2011. [Online]. Available:
  \url{http://arxiv.org/abs/1110.6161}
\BIBentrySTDinterwordspacing

\bibitem{Uysal2011}
\BIBentryALTinterwordspacing
H.~Erkal, F.~M. Ozcelik, and E.~Uysal-Biyikoglu, ``Optimal offline broadcast
  scheduling with an energy harvesting transmitter,'' \emph{CoRR}, vol.
  abs/1111.6502, 2011. [Online]. Available:
  \url{http://arxiv.org/abs/1111.6502}
\BIBentrySTDinterwordspacing

\bibitem{MehtaEH2010}
B.~Medepally and N.~Mehta, ``Voluntary energy harvesting relays and selection
  in cooperative wireless networks,'' \emph{{IEEE} Trans. Wireless Commun.},
  vol.~9, no.~11, pp. 3543 --3553, Nov. 2010.

\bibitem{HuangEH2011}
K.~Huang, ``Throughput of wireless networks powered by energy harvesting,''
  \emph{CoRR}, vol. abs/1111.5799, 2011.

\bibitem{BorodinOnlineBook}
A.~Borodin and R.~El-Yaniv, \emph{Online Computation and Competitive
  Analysis}.\hskip 1em plus 0.5em minus 0.4em\relax Cambridge University Press,
  1998.

\bibitem{ElGamalOnline2002}
A.~El~Gamal, C.~Nair, B.~Prabhakar, E.~Uysal-Biyikoglu, and S.~Zahedi,
  ``Energy-efficient scheduling of packet transmissions over wireless
  networks,'' in \emph{IEEE INFOCOM 2002.}, vol.~3, June 2002, pp. 1773 --
  1782.

\bibitem{ChangOnline2008}
N.~Chang and M.~Liu, ``Competitive analysis of opportunistic spectrum access
  strategies,'' in \emph{IEEE INFOCOM 2008.}, April 2008, pp. 1535 --1542.

\bibitem{BuchbinderOnline2009}
N.~Buchbinder, L.~Lewin-Eytan, I.~Menache, J.~Naor, and A.~Orda, ``Dynamic
  power allocation under arbitrary varying channels - an online approach,'' in
  \emph{IEEE INFOCOM 2009}, April 2009, pp. 145 --153.

\bibitem{BuchbinderOnline2010}
------, ``Dynamic power allocation under arbitrary varying channels - the
  multi-user case,'' in \emph{IEEE INFOCOM, 2010}, March 2010, pp. 1 --9.

\bibitem{KannanOnline2010}
R.~Kannan, S.~Wei, J.~Zhang, and V.~Athanasios, ``Throughput optimization for
  cognitive radios under sensing uncertainty,'' in \emph{Military
  Communications Conference, 2010 - MILCOM 2010}, Oct. 31-Nov. 3, 2010 2010,
  pp. 767 --772.

\bibitem{Vaze2012InfocomOnlineFading}
R.~Vaze, R.~Garg, and N.~Pathak, ``Dynamic power allocation for maximizing
  throughput in energy harvesting communication system,'' submitted to IEEE
  INFOCOM 2013.

\bibitem{SibiUncoordinatedMAC2011}
Y.~Deshpande, S.~Pillai, and B.~Dey, ``On the sum capacity of multiaccess
  block-fading channels with individual side information,'' in \emph{IEEE
  Information Theory Workshop (ITW), 2011}, Oct. 2011, pp. 588 --592.

\end{thebibliography}

   \end{document}